\newcommand{\spi}{\ensuremath{\sigma\pi^{-1}}}
\newcommand{\img}{\mbox{im}}
\title{On the Matrix Median Problem}
\author{Jo\~{a}o Paulo Pereira Zanetti\inst{1} \and Priscila Biller\inst{1} \and Jo\~{a}o Meidanis\inst{1,2}}
\date{}
\institute{Institute of Computing, University of Campinas, SP, Brazil \and
Scylla Bioinformatics, Campinas, Brazil}
\begin{document}

\maketitle

\begin{abstract} 
The Genome Median Problem is an important problem in phylogenetic reconstruction under rearrangement models. It can be stated as follows: given three genomes, find a fourth that minimizes the sum of the pairwise rearrangement distances between it and the three input genomes. Recently, Feij\~{a}o and Meidanis extended the algebraic theory for genome rearrangement to allow for linear chromosomes, thus yielding a new rearrangement model (the algebraic model), very close to the celebrated DCJ model. In this paper, we study the genome median problem under the algebraic model, whose complexity is currently open, proposing a more general form of the problem, the matrix median problem.
It is known that, for any metric distance, at least one of the corners
is a $\frac{4}{3}$-approximation of the median. Our results allow us to compute up to three additional matrix median candidates,
all of them with approximation ratios at least as good as the best corner, when the input
matrices come from genomes. From the application point of view, it is usually more interesting to locate medians farther from the corners.
We also show a fourth median candidate that gives better results in
cases we tried. However, we do not have proven bounds
for this fourth candidate yet.
\end{abstract}


\section{Introduction}

Genome rearrangements are evolutionary events where large, continuous pieces of the genome shuffle around, changing the order of genes in the genome of a species. Gene order data can be very useful in estimating the evolutionary distance between genomes, and also in reconstructing the gene order of ancestral genomes. The simplest form of inference of evolutionary scenarios based on gene order is the pairwise genome rearrangement problem: given two genomes, find a parsimonious rearrangement scenario between them, that is, the smallest sequence of rearrangement events that transforms one genome into the other.

For most rearrangement events proposed, this problem has already been solved, usually with linear or subquadratic algorithms. However, when more than two genomes are considered, inferring evolutionary scenarios becomes much more difficult. This problem is known as the \emph{multiple genome rearrangement problem} (MGRP):  given a set of genomes, find a tree with the given genomes as leaves and an assignment of genomes to the internal nodes such that the sum of all edge lengths (the pairwise distance between adjacent genomes) is minimal. 

The MGRP may still be hard even when only three genomes are considered, the so called \emph{genome median problem} (GMP): given three genomes, find a fourth genome that minimizes the sum of its pairwise distances to the other three. The GMP is NP-complete in most rearrangement models, with notable exceptions being the multichromosomal breakpoint distance~\cite{Tannier2009a} and the Single-Cut-or-Join (SCJ) model~\cite{Feijao2011}. The GMP is a particularly interesting problem, because several algorithms for the MGRP are based on repeatedly solving GMP instances, until convergence is reached (for instance, the pioneering BPAnalysis~\cite{Sankoff1998}, the more recent GRAPPA~\cite{Moret2001}, and MGR~\cite{Bourque2002}). 

In this paper we will focus on the algebraic rearrangement model proposed by Meidanis and Dias~\cite{MeidanisJoaoandDias2000}, recently extended to allow linear chromosomes in a very natural way by Feij\~{a}o and Meidanis~\cite{Feijao2012}. This extended algebraic rearrangement model is similar to the well-known Double-Cut-and-Join (DCJ) model~\cite{Yancopoulos2005a}, with a slight difference in the weight of single cut/join operations, where the weight for this operations is 1 in the DCJ model, but $1/2$ in the algebraic model. The algebraic pairwise distance problem can be solved in linear time, but the median problem remains open.

The main goal of this paper is to investigate the problem of computing
the algebraic median of three genomes. The median problem can be stated as follows: given three genomes
$\pi_1$, $\pi_2$, and $\pi_3$, and a distance metric $d$, find a genome $\mu$ that minimizes
$d(\mu; \pi_1, \pi_2, \pi_3)$, defined as the \emph{total score}
\[
  d(\mu; \pi_1, \pi_2, \pi_3) = d(\mu, \pi_1) + d(\mu, \pi_2) + d(\mu, \pi_3).
\]

We do not know yet the status of the median problem for the 
algebraic distance, but suspect it may be NP-hard as
well.  However, in this paper we show that viewing genomes (or even
general permutations) as matrices, the median can be approximated 
 quickly, although the matrix solution may not be always
translated back into permutations or genomes.  Nevertheless, this
positive result can help shed more light into the problem, by leading
to approximation solutions, or to special cases that can be solved
polynomially in the genome setting.

It is known that, for any distance satisfying the axioms of a metric, at least one of the corners
is a $\frac{4}{3}$-approximation of the median~\cite{Sankoff1996}. Our
results allow us to compute up to three additional matrix median candidates,
all of them with approximation ratios at least as good as the best corner, when the input
matrices come from genomes. Also, application-wise, it is usually more
interesting to locate medians farther from the corners~\cite{HS2012}.

The rest of this paper is organized as follows. In
Section~\ref{s:alg}, we have basic definitions regarding the
algebraic adjacency theory needed in this work. In
Section~\ref{s:res}, we show how genomes can also be seen as matrices,
define the matrix median problem, show our results, and propose an
algorithm. Finally, in Section~\ref{s:con}, we present our
conclusions.

\section{Algebraic Rearrangement Theory}
\label{s:alg}

We will start this section showing some basic definitions of the algebraic theory of Feij\~{a}o and Meidanis~\cite{Feijao2012}. 

\subsection{Permutations}

Given a set $E$, a \emph{permutation} $\alpha : E \rightarrow E$ is a bijective map from $E$ onto itself.  Permutations are
represented as parenthesized lists, with each element followed by its image. For instance, on $E=\{a,b,c\}$, $\alpha = (a\;b\;c)$ is the permutation that maps $a$ to $b$, $b$ to $c$, and maps $c$ back to $a$. This representation is not unique; $(b\;c\;a)$ and $(c\;a\;b)$
are equivalent. Permutations are composed of one or more \emph{cycles}. For instance, the permutation $\alpha = (a\;b\;c)(d\;e)(f)$ has three cycles. A cycle with $k$ elements is called a \emph{$k$-cycle}. An 1-cycle represents a fixed element in the permutation and is usually omitted.

The \emph{product} or \emph{composition} of two permutations $\alpha$, $\beta$ is denoted by $\alpha\beta$. The product $\alpha\beta$ is defined as $\alpha\beta(x)=\alpha(\beta(x))$ for $x\in E$. For instance, with $E=\{a ,b,c,d,e,f\}$, $\alpha=(b\;d\;e)$ and $\beta=(c\;a\;e\;b\;f\;d)$, we have $\alpha\beta = (c\;a\;b\;f\;e\;d)$. 

The \emph{identity permutation}, which maps every element into itself, will be denoted by $\mathbf{1}$. Every permutation $\alpha$ has an \emph{inverse} $\alpha^{-1}$ such that $\alpha\alpha^{-1} = \alpha^{-1}\alpha = \mathbf{1}$. For a cycle, the inverse is obtained by reverting the order of its elements: $(c\;b\;a)$ is the inverse of $(a\;b\;c)$.

A \emph{2-cycle decomposition} of a permutation $\alpha$ is a representation of $\alpha$ as a product of 2-cycles, not necessarily disjoint. All permutations have a 2-cycle decomposition. The \emph{norm} of a permutation $\alpha$, denoted by $\|\alpha \|$, is the minimum number of cycles in a 2-cycle decomposition of $\alpha$. For example, the permutation $\alpha = (a \; b \; c \; d)$ can be decomposed as $(a \; b)(b \; c)(c \; d)$, and $\|\alpha \| = 3$.

\subsection{Modeling Genomes as Permutations}

To model genomes with the Algebraic Theory, the formulation is similar to the set representation of a genome, used in several related works~\cite{Tannier2009a,Feijao2011}. In this representation, each gene $a$ has two \emph{extremities}, called \emph{tail} and \emph{head}, respectively denoted by $a_t$ and $a_h$, or alternatively using signs, where $-a = a_h$ and $+a = a_t$. An \emph{adjacency} is an unordered pair of extremities indicating a linkage between two consecutive genes in a chromosome. An extremity not adjacent to any other extremity in a genome is called a \emph{telomere}. A genome is represented as a set of adjacencies and telomeres (the telomeres may be omitted, when the gene set is given) where each extremity appears at most once.

According to algebraic
rearrangement theory~\cite{Feijao2012}, a genome can be seen as a
permutation $\pi : {E} \mapsto {E}$, where $E$ is the set of gene
extremities, with the added property that $\pi^2 = \mathbf{1}$, the
identity permutation.  In this paper, the \emph{distance} between 
two genomes or two permutations $\pi$ and $\sigma$ will be defined as
$d(\pi,\sigma) = \|\spi\|$. It is important 
to note that, in the original paper, the algebraic distance is defined 
as $\frac{\|\spi\|}{2}$. However, to avoid dealing with fractional 
numbers and to simplify the calculations, we will multiply the distances 
by 2 in this paper.

In the algebraic theory, genomes are represented by permutations, with a \emph{genome} being a product of 2-cycles, where each 2-cycle corresponds to an adjacency. Figure~\ref{fig:genome} shows an example of a genome and its representation as a permutation.

With these definitions, a distance between two genomes $\sigma$ and $\pi$ can be defined as $d(\pi,\sigma) = \|\spi\|$, as mentioned in the introduction.  The resulting distance agrees with DCJ in circular genomes, and in general these two distances are very close.

\begin{figure}[tp]
\centering
\includegraphics{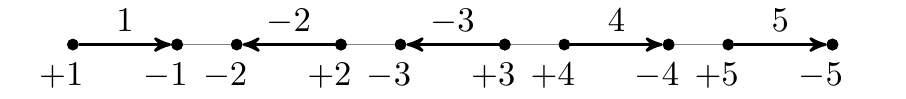}
\caption{A genome with one linear chromosome, represented by the permutation $\pi = ({-1}\;{-2})({+2}\;{-3})({+3}\;{+4})({-4}\;{+5})$.}
\label{fig:genome}
\end{figure}

\section{Results}
\label{s:res}

Permutations can be seen as matrices, and, in this section, we define
a matrix distance that corresponds to the algebraic distance and prove
that it is indeed a valid metric in general, that is, it applies to
all square matrices, not only those associated to permutations. Such
a metric can be useful in the computation of genome medians, and we show
here how to compute an approximate solution to the matrix median
problem, by solving a system of linear equations.

\subsection{Matrix Distance}
\label{s:dist}

Given two $n \times n$ matrices $A$ and $B$, we define the
\emph{distance} between them as:
\[
  d(A, B) = r(B - A),
\]
where $r(X)$ denotes the \textbf{rank} of matrix $X$. It is well-known that
\begin{equation}
\label{e:rank-dim-im}
  r(X) = \dim\img(X),
\end{equation}
where $\dim$ denotes the dimension of a vector space and $\img(X)$ is the \textbf{image} of $X$, namely, the space of all vectors that can be written as $Xv$ for some $v \in \mathbb{R}^n$.  Here we treat vectors as column matrices, that is, $n\times 1$ matrices.  Therefore, the distance satisfies
\[
  d(A,B) = \dim\img(B-A).
\]
for every pair of matrices $A$ and $B$.  It is also well-known that
\begin{equation}
\label{e:rank-compl-ker}
  r(X) = n - \dim \ker(X),
\end{equation}
for every $x\times n$ matrix $X$, where $\ker(X)$ is the \textbf{kernel} of $X$, defined as the space of all vectors $v \in \mathbb{R}^n$ such that $Xv=0$, so the distance satisfies yet another formula:
\[
  d(A,B) = n - \dim\ker(B-A).
\]

This distance can be shown to satisfy the conditions of a metric, that
is, it is symmetric, obeys the triangle inequality, and $d(A,B)=0$ if
and only if $A=B$~\cite{Delsarte1978}.




Given this metric, the first interesting observation is that permutations (including
genomes) can be mapped to matrices in a distance-preserving
way.  Given a permutation $\alpha : {E} \mapsto {E}$, with
$|E| = n$, we first identify each element $v$ of $E$ with a unit
vector of $\mathbb{R}^n$, and then define $A$, the matrix counterpart of
$\alpha$, so that
\begin{equation}
  Av = \alpha v.\label{eq:dist}
\end{equation}
In Equation~\ref{eq:dist}, we use $v$ both as a unit vector of $\mathbb{R}^n$ in the left side,
and as an element of $E$ in the right side.

An example will help.  Let $\alpha$ be the permutation $(a\;b)(c\;d)$. Identify 
$a = \big[1\;0\;0\;0\big]^t$, $b = \big[0\;1\;0\;0\big]^t$, 
$c = \big[0\;0\;1\;0\big]^t$, and $d = \big[0\;0\;0\;1\big]^t$,
where $v^t$ is the transpose of vector $v$. We then have
\[
  A = \left[\begin{array}{cccc}
      0&1&0&0\\1&0&0&0\\0&0&0&1\\0&0&1&0
    \end{array}\right].
\]

It is well known that this mapping produces matrices $A$ that are
invertible, and that satisfy $A^{-1} = A^t$, where $A^t$ denotes the
transpose of matrix $A$.  Also, the identity permutation corresponds to the 
identity matrix $I$, and the product $\alpha\beta$ corresponds to
matrix $AB$, where $A$ is the matrix corresponding to $\alpha$, and
$B$ is the matrix corresponding to $\beta$.  If $\alpha$ happens to be
a genome, that is, if $\alpha^2 = \mathbf{1}$, then $A$ is a symmetric
matrix and vice-versa.

We now show that this mapping is distance-preserving.

\begin{lemma}
  For any permutation $\sigma$ and $\pi$, and their respective
  associated matrices $S$ and $P$, we have:

  \[
  d(\sigma,\pi) = d(S,P).
  \]
\end{lemma}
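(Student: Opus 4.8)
The plan is to translate the rank $d(S,P) = r(P-S)$ into a count of cycles of the permutation $\spi$, so that it coincides with $\|\spi\| = d(\sigma,\pi)$. First I would use the properties of the matrix assignment recalled just above the lemma: $P$ is invertible and $P^{-1}$ is the matrix of $\pi^{-1}$, and the matrix of $\spi$ is $Q := SP^{-1}$. Since $P - S = (I - SP^{-1})P = (I - Q)P$ and right-multiplication by the invertible matrix $P$ preserves rank, we get $d(S,P) = r(I - Q)$.

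Next, by Equation~(\ref{e:rank-compl-ker}), $r(I-Q) = n - \dim\ker(I-Q)$, where $\ker(I-Q)$ is the subspace of vectors fixed by $Q$, equivalently those constant on each orbit (cycle) of $q := \spi$. The core step is to show $\dim\ker(I-Q) = c(q)$, the number of cycles of $q$ counted with $1$-cycles included (so that the cycles partition all $n$ elements of $E$). To see this, for each cycle of $q$ take the $0/1$ indicator vector of its element set: it is fixed by $Q$; indicator vectors of distinct cycles have disjoint supports, hence are linearly independent; and they span the fixed space since any fixed vector is constant on each cycle. Therefore $d(S,P) = n - c(q)$.

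It then remains to recall the standard identity $\|q\| = n - c(q)$ for any permutation $q$ of an $n$-element set: each $k$-cycle is a product of $k-1$ transpositions, giving a $2$-cycle decomposition of total size $\sum (k_i - 1) = n - c(q)$, and none shorter exists because a single transposition changes the number of cycles by exactly one, so reaching $q$ from $\mathbf{1}$ (which has $n$ cycles) needs at least $n - c(q)$ of them. Chaining the equalities, $d(S,P) = n - c(q) = \|q\| = \|\spi\| = d(\sigma,\pi)$.

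I expect the main obstacle to be the middle paragraph — identifying $\dim\ker(I-Q)$ with the number of cycles of $\spi$ — since that is the one place where the combinatorics of the permutation must be matched against the linear algebra; the other two ingredients (multiplicativity of the permutation-to-matrix map and $\|q\| = n - c(q)$) are routine, the only point requiring care being that $n$ must count all elements of $E$, including those fixed by $\spi$.
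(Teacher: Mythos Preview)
Your argument is correct. The reduction $r(P-S)=r(I-Q)$ with $Q=SP^{-1}$ is exactly the paper's first step, and from there both proofs amount to establishing $r(I-Q)=n-c(\spi)$ and $\|\spi\|=n-c(\spi)$. The difference is in how the first of these is obtained: the paper proceeds inductively over the disjoint-cycle decomposition, checking the single $k$-cycle case and then combining cycles via the kernel identities $\ker(A-I)\cap\ker(B-I)=\ker(AB-I)$ and $\ker(A-I)+\ker(B-I)=\mathbb{R}^n$ for disjoint $\alpha,\beta$; you instead compute $\dim\ker(I-Q)$ in one stroke by exhibiting the cycle-indicator vectors as a basis of the fixed space. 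Your route is a bit more elementary and self-contained, since it avoids proving the two auxiliary kernel identities, while the paper's decomposition makes the additivity of rank over disjoint factors explicit. Either way the content is the same; your only small care point, which you already flagged, is that $c(q)$ must count fixed points as $1$-cycles so that the cycles partition all of $E$.
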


\begin{proof}

First, notice that it suffices to show that
\begin{equation}
\label{e:sole}
  \|\alpha\| = r(A - I),
\end{equation}
for any permutation $\alpha$ and associated matrix $A$.  Indeed, if $P$
is invertible then
\[
  r(S - P) = r(SP^{-1} - I),
\]
and then Equation~(\ref{e:sole}) will relate $d(\sigma, \pi)$ to the
distances of the corresponding matrices $S$ and $P$.

Then proceed to show that, for a $k$-cycle, Equation~(\ref{e:sole}) is true since
both sides are equal to $k-1$.  Finally, for a general permutation,
decompose it in disjoint cycles, and use the fact that, for
\emph{disjoint} permutations $\alpha$ and $\beta$, with associated
matrices $A$ and $B$, respectively, we have
\[
  \ker (A - I) \cap \ker (B - I) = \ker (AB - I),
\]
and
\[
  \ker (A - I) + \ker (B - I) = \mathbb{R}^n,
\]
which guarantee that
\[
  n - \dim \ker (AB - I) = n - \dim \ker (A - I) + n - \dim \ker (B - I).
\]
or
\[
  r(AB - I) = r(A - I) + r(B - I),
\]
because of Equation~\ref{e:rank-compl-ker}.

Therefore, if Equation~(\ref{e:sole}) is valid for $\alpha$ and $\beta$, and if $\alpha$ and $\beta$ are disjoint, then Equation~(\ref{e:sole}) is valid for the product $\alpha\beta$. Since any permutation can be written as a product of disjoint cycles, Equation~(\ref{e:sole}) is valid in general.
\qed
\end{proof}

\subsection{Matrix Median}
\label{s:mat-med}

Because the correspondence between permutations and matrices preserves
distances, it makes sense to study the matrix median problem as a way
of shedding light into the permutation median problem, which in turn is
related to the genome median problem.

Let $A$, $B$, and $C$ be three $n \times n$ matrices.  Suppose we want
to find a matrix $M$ such that
\[
  d(M; A, B, C) = d(M, A) + d(M, B) + d(M, C)
\]
is minimized.  In order to have small $d(M,A)$, $M$ must be equal to
$A$ in a large subspace, so that $\ker(A - M)$ is large.  Similarly
with $B$ and $C$.

This suggests the following strategy.
Decompose $\mathbb{R}^n$ as a direct sum of five subspaces, where
the following relations are true: 
(1) $A = B = C$,  (2) $A = B \neq C$, (3) $A \neq B = C$, 
(4) $A = C \neq B$, and (5) $A \neq B \neq C \neq A$. 

In the first subspace, since $A$, $B$, and
$C$ all have the same behaviour, $M$ should also do the same thing.
In the second subspace, since $A=B$ but $C$ is different, it is better
for $M$ to go with $A$ and $B$.  Likewise, in the third subspace $M$
should concur with $B$ and $C$, and with $A$ and $C$ in the fourth.
Finally, in the final subspace it seems hard to gain points in two
different distances, so the best course for $M$ would be to mimic one of
$A$, $B$, or $C$.

Therefore, making $M$ equal to $A$, except in the third subspace,
where it should be equal to $B$ (and $C$)
should yield a good approximation of a median, if not a median.  The rest of this section
will be devoted to showing the details on this construction.

\subsection{Partitioning $\mathbb{R}^n$}
\label{sec:part}

We begin by introducing notation aimed at formalizing
subspaces such as $A=B\neq C$.
Given $n\times n$ matrices $A$, $B$, and $C$,
we will use a dotted notation to indicate a partition,
e.g., $ .  AB  .  C . $ means a partition where
$A$ and $B$ are in one class, and $C$ is in another class
by itself.  To each such partition, we associate a vector
subspace of $\mathbb{R}^n$ formed by those vectors having the same
image in each class:
\[
  V( .  AB  .  C . ) = \{ v \in \mathbb{R}^n | Av = Bv \}.
\]

Notice that singleton classes do not impose additional restrictions.
With this notation, the subspace we used to call $A=B=C$ can be written
as $V( .  ABC . )$.  Notice also that $V( .  A .  B .  C . ) = \mathbb{R}^n$.

We need also a notation for strict subspaces, such as $A=B\neq C$,
where distinct classes actually disagree, that is, subspaces where
vectors have different images under each class.
There can be more than one
subspace satisfying this property, but we can use orthogonality to
define a unique subspace. For a partition $p$, we define $V_*(p)$ as
the orthogonal complement of the sum of the partitions strictly
refined by $p$ with respect to $V(p)$:
\[
V_*(p) = V(p) \cap (\sum_{p < q} V(q))^{\perp}
\]
where $p < q$ means that partition $p$ strictly 
refines partition $q$.  In other words, we want to capture
the part of the subspace $V(p)$ that is orthogonal to the sum of the
subspaces corresponding to coarser partitions.  Notice
that $p < q$ implies $V(q) \subseteq V(p)$.

It is easy to see that the $V_*$ subspaces are pairwise disjoint, but 
this is not enough to prove that their direct sum 
is $\mathbb{R}^n$. So, the proof will start from the basic sum
$V_*(.ABC.) \oplus V_*(.AB.C.)$, where we already know that $V_*(.ABC.) \cap V_*(.AB.C.) = \{0\}$,
and it will add one subspace in the sum at a time, ensuring that the intersection 
between the new subspace and the sum of the subspaces previously included  
contains the zero vector only.

\begin{lemma}\label{l:dirsum-first}
  If A, B, and C are permutation matrices, then
  \[(V_*(.ABC.)+V_*(.AB.C.)) \cap V_*(.BC.A.) = \{0\}.\]
\end{lemma}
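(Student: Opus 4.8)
The plan is to exploit the orthogonality that is built into the definition of $V_*$, rather than to count dimensions. Abbreviate $W_0 = V_*(.ABC.)$, $W_1 = V_*(.AB.C.)$, and $W_2 = V_*(.BC.A.)$. The only partition strictly coarser than $.AB.C.$ (resp.\ than $.BC.A.$) is $.ABC.$ itself, so by definition $W_0 = V(.ABC.)$, $W_1 = V(.AB.C.) \cap V(.ABC.)^{\perp}$, and $W_2 = V(.BC.A.) \cap V(.ABC.)^{\perp}$. In particular $W_1, W_2 \subseteq V(.ABC.)^{\perp} = W_0^{\perp}$, so $W_0$ is orthogonal to both $W_1$ and $W_2$; this single observation does most of the work.

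First I would pick an arbitrary $v \in (W_0 + W_1) \cap W_2$ and write $v = w_0 + w_1$ with $w_0 \in W_0$ and $w_1 \in W_1$. Since $v \in W_2 \subseteq W_0^{\perp}$ we have $\langle v, w_0 \rangle = 0$, and since $w_1 \in W_1 \subseteq W_0^{\perp}$ we also have $\langle w_1, w_0 \rangle = 0$; expanding $\langle w_0 + w_1, w_0 \rangle = 0$ then yields $\|w_0\|^2 = 0$, hence $w_0 = 0$ and $v = w_1 \in W_1 \cap W_2$.

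It then remains to check that $W_1 \cap W_2 = \{0\}$. The key point here is that intersecting agreement subspaces merges their classes: $V(.AB.C.) \cap V(.BC.A.) = \{ v \mid Av = Bv \text{ and } Bv = Cv \} = V(.ABC.)$. Hence $W_1 \cap W_2 \subseteq V(.ABC.) \cap V(.ABC.)^{\perp} = \{0\}$, and combining with the previous paragraph gives $v = 0$, which proves the lemma.

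I do not expect a genuine obstacle here: the argument is essentially a two-line orthogonality computation. The one place to be careful is the bookkeeping that determines which partitions are strictly coarser than which, so that $W_1$ and $W_2$ are correctly identified as lying inside $V(.ABC.)^{\perp}$. It is also worth noting that the hypothesis that $A$, $B$, and $C$ are permutation matrices does not actually seem to be used in this particular step; it is presumably needed only for the later stages of the construction, when the remaining $V_*$ subspaces are added and one must argue that their accumulated sum exhausts $\mathbb{R}^n$.
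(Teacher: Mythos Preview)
Your proof is correct. The paper's argument reaches the same endpoint but is slightly more direct: instead of first peeling off the $W_0$-component via orthogonality, it simply observes that $W_0 + W_1 \subseteq V(.AB.C.)$ (both summands consist of vectors with $Av=Bv$), so any $v$ in the intersection with $W_2 \subseteq V(.BC.A.)$ automatically satisfies $Av=Bv$ and $Bv=Cv$, hence lies in $V(.ABC.)$; since $W_2 \subseteq V(.ABC.)^{\perp}$, the conclusion $v=0$ follows immediately. Your route arrives at the same contradiction $v \in V(.ABC.) \cap V(.ABC.)^{\perp}$, but the intermediate step of showing $w_0=0$ is unnecessary once one notices that $W_0 \subseteq V(.AB.C.)$ already. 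Your observation that the permutation hypothesis plays no role in this particular lemma is accurate; the paper's own proof does not invoke it either.
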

\begin{proof}
  Let $v$ be a vector such that
  \[v \in (V_*(.ABC.) + V_*(.AB.C.)) \cap V_*(.BC.A.).\]

  We have that $Av = Bv$ and also $Av = Cv$. Therefore $Av = Bv = Cv$ and, consequently, $v = 0$, since $v \in V_*(.ABC.)$.
  \qed
\end{proof}

Before we add more subspaces to the sum, we will need the result
of the following lemma. 

\begin{lemma}
\label{l:permsum}
 If $A$, $B$, and $C$ are permutation matrices, and if \[2Bv = Av + Cv\] for a given vector $v$, then $Av=Cv$.
\end{lemma}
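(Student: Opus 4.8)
The plan is to exploit the fact that permutation matrices are orthogonal and act on the finite set of unit vectors, so that all the "lengths" involved are tightly controlled. First I would look at the Euclidean norms. Since $A$, $B$, and $C$ are permutation matrices, they are orthogonal, so $\|Av\| = \|Bv\| = \|Cv\| = \|v\|$ for every vector $v$. From the hypothesis $2Bv = Av + Cv$ we get, taking norms and using the triangle inequality,
\[
  2\|v\| = \|2Bv\| = \|Av + Cv\| \le \|Av\| + \|Cv\| = 2\|v\|,
\]
so equality holds in the triangle inequality. For the Euclidean norm, equality $\|x + y\| = \|x\| + \|y\|$ with $\|x\| = \|y\|$ forces $x = y$; applying this with $x = Av$ and $y = Cv$ yields $Av = Cv$, as desired.

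The key steps, in order, are: (1) record that permutation matrices are orthogonal, hence norm-preserving; (2) apply this to $Av$, $Cv$, and $Bv$ to turn the hypothesis into an equality case of the triangle inequality; (3) invoke the strict convexity of the Euclidean norm (or, equivalently, expand $\|Av - Cv\|^2 = \|Av\|^2 - 2\langle Av, Cv\rangle + \|Cv\|^2$ and combine with $4\|Bv\|^2 = \|Av\|^2 + 2\langle Av, Cv\rangle + \|Cv\|^2$ to conclude $\|Av - Cv\|^2 = 0$) to extract $Av = Cv$. The inner-product computation is entirely routine: substituting $\|Av\| = \|Bv\| = \|Cv\| = \|v\|$ into the expansion of $\|2Bv\|^2 = \|Av + Cv\|^2$ gives $4\|v\|^2 = 2\|v\|^2 + 2\langle Av, Cv\rangle$, so $\langle Av, Cv\rangle = \|v\|^2$, and then $\|Av - Cv\|^2 = \|v\|^2 - 2\|v\|^2 + \|v\|^2 = 0$.

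I do not expect any serious obstacle here; the only thing to be careful about is making sure the argument genuinely uses that $A$, $B$, $C$ are \emph{permutation} (equivalently orthogonal) matrices and not merely invertible — the statement is false for general invertible matrices, since norm preservation is exactly what makes the equality case of the triangle inequality bite. An alternative, more combinatorial route would be to argue entrywise: $Av$, $Bv$, $Cv$ are each a permutation of the coordinates of $v$, and $2Bv = Av + Cv$ says each coordinate of $Bv$ is the average of a coordinate of $Av$ and a coordinate of $Cv$; comparing multisets of coordinates (all three are the multiset of entries of $v$) forces the averaged pairs to be equal. The norm argument is cleaner and I would present that one.
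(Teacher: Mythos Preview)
Your proposal is correct and follows essentially the same route as the paper: use that permutation matrices are norm-preserving to turn $2Bv = Av + Cv$ into the equality case of the triangle inequality, and then conclude $Av = Cv$. Your explicit inner-product computation showing $\|Av - Cv\|^2 = 0$ is a slightly more self-contained way of finishing than the paper's ``parallel vectors with equal norm'' phrasing, but the argument is the same.
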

\begin{proof}
Denote by $|x|$ the norm of a vector $x$. Note that $A$, $B$, and $C$ preserve norms, thus
  \[|Av + Cv| = |2Bv| = 2|v| = |v| + |v| = |Av| + |Cv|.\]
But, if the norm of the sum is equal to the sum of the norms, then the two vectors are parallel, 
and have the same orientation. In other words, there is a positive scalar $c$ such that $cAv = Cv$. But we
already have that $|Av|=|v|=|Cv|$. Therefore, $c=1$.
\qed
\end{proof}

\begin{lemma}
 If A, B, and C are permutation matrices, then
  \[(V_*(.ABC.)+V_*(.AB.C.)+V_*(.BC.A.)) \cap V_*(.AC.B.) = \{0\}.\]
\end{lemma}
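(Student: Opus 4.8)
The plan is to take an arbitrary $v$ in the intersection, write it as $v = v_1 + v_2 + v_3$ with $v_1 \in V_*(.ABC.)$, $v_2 \in V_*(.AB.C.)$, $v_3 \in V_*(.BC.A.)$, and show $v = 0$. First I would record the equalities that come for free from membership: $Av_1 = Bv_1 = Cv_1$ (since $v_1 \in V(.ABC.)$), $Av_2 = Bv_2$ (since $v_2 \in V(.AB.C.)$), $Bv_3 = Cv_3$ (since $v_3 \in V(.BC.A.)$), and $Av = Cv$ (since $v \in V(.AC.B.)$). Expanding $Av = Cv$ and cancelling the $v_1$ contribution, which is legitimate because $Av_1 = Cv_1$, gives $Av_2 + Av_3 = Cv_2 + Cv_3$; substituting $Av_2 = Bv_2$ and $Cv_3 = Bv_3$ rewrites this as $Bv_2 - Cv_2 = Bv_3 - Av_3$, a single vector I will call $y$.

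The heart of the argument is to apply Lemma~\ref{l:permsum} not to $v$ itself but to the difference $v_2 - v_3$. Using the relations above together with $Av_3 = Bv_3 - y$ and $Cv_2 = Bv_2 - y$, a short computation gives $A(v_2 - v_3) = Bv_2 - Bv_3 + y$ and $C(v_2 - v_3) = Bv_2 - Bv_3 - y$, hence $A(v_2 - v_3) + C(v_2 - v_3) = 2B(v_2 - v_3)$. Lemma~\ref{l:permsum} then yields $A(v_2 - v_3) = C(v_2 - v_3)$. Adding and subtracting this with the already-established $A(v_2 + v_3) = C(v_2 + v_3)$ separates the two vectors: $Av_2 = Cv_2$ and $Av_3 = Cv_3$. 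Combined with $Av_2 = Bv_2$ and $Cv_3 = Bv_3$, this places both $v_2$ and $v_3$ in $V(.ABC.)$.

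To finish, I would invoke the orthogonality built into the $V_*$ construction. Since the only partition strictly coarser than $.AB.C.$ (resp.\ $.BC.A.$) is $.ABC.$, we have $V_*(.AB.C.), V_*(.BC.A.) \subseteq V(.ABC.)^{\perp}$, so $v_2, v_3 \in V(.ABC.) \cap V(.ABC.)^{\perp} = \{0\}$. Therefore $v = v_1 \in V_*(.ABC.) \cap V_*(.AC.B.)$, which is $\{0\}$ because the $V_*$ subspaces are pairwise disjoint (equivalently, $V_*(.AC.B.) \perp V(.ABC.) = V_*(.ABC.)$). Hence $v = 0$, as desired. The overall shape of the proof follows the incremental "add one subspace at a time" strategy announced before Lemma~\ref{l:dirsum-first}.

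The step I expect to be the only non-routine idea is recognizing that $v_2 - v_3$, rather than the sum $v = v_2 + v_3$, is the right vector to feed into Lemma~\ref{l:permsum}; once that is spotted, everything reduces to bookkeeping with the defining equalities and the orthogonality of the $V_*$ subspaces. The one place to be careful is the cancellation of the $v_1$-terms, which works precisely because $v_1$ lies in $V(.ABC.)$ and is therefore acted on identically by $A$, $B$, and $C$.
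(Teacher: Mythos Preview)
Your proposal is correct and follows essentially the same route as the paper: decompose the vector, cancel the $V_*(.ABC.)$ part, and apply Lemma~\ref{l:permsum} to the difference $v_2 - v_3$ to deduce $Av_2 = Cv_2$ and $Av_3 = Cv_3$. The only cosmetic difference is that the paper concludes directly that the whole sum lies in $V_*(.ABC.) = V(.ABC.)$ and intersects it with $V_*(.AC.B.)$, whereas you additionally kill $v_2$ and $v_3$ individually via orthogonality before handling $v_1$; both endings are valid.
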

\begin{proof}
Suppose that $u + v + w \in V_*(.AC.B.)$, where $u \in V_*(.ABC.)$, 
$v \in V_*(.AB.C.)$, and $w \in V_*(.BC.A.)$. We have that
$A(u+v+w) = C(u+v+w)$, which implies $A(v+w) = C(v+w)$, since $Au=Cu$.
Thus, we have
\begin{align*}
Av + Aw &= Cv + Cw\\
Bv + Aw &= Cv + Bw\\
B(v-w) &= Cv - Aw.
\end{align*}
Now we will apply $A$ and $C$ to $v-w$, and sum the results, obtaining:
\begin{align*}
A(v-w) &= Bv - Aw \\
C(v-w) &= Cv - Bw \\
A(v-w) + C(v-w) &= B(v-w) + Cv - Aw = 2B(v-w) 
\end{align*}
By Lemma~\ref{l:permsum}, we conclude that $A(v-w) = C(v-w)$. But we also have 
that $A(v+w) = C(v+w)$, which implies $Av = Cv$ and $Aw = Cw$. In other words, $u + v + w \in V_*(.ABC.)$.
But $u+v+w$ also belongs to $V_*(.AC.B.)$. It follows that $u + v + w = 0$.
\qed
\end{proof}

\begin{lemma}\label{l:dirsum-last}
  If A, B, and C are permutation matrices, then
 \[(V_*(.ABC.)+V_*(.AB.C.)+V_*(.BC.A.)+V_*(.AC.B.)) \cap V_*(.A.B.C.) = \{0\}.\]  
\end{lemma}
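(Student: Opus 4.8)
The plan is to reduce this last case to a one-line orthogonality argument, bypassing entirely the norm-preservation machinery (Lemma~\ref{l:permsum}) that drove the earlier cases. Write $U = V(.AB.C.) + V(.AC.B.) + V(.BC.A.)$ for the sum of the three ``pair'' subspaces. The key remark is that the left-hand side of the claimed identity is a subspace of $U$, whereas $V_*(.A.B.C.)$ is exactly the orthogonal complement $U^{\perp}$; two such subspaces can only meet in the zero vector.

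First I would unwind the definition of $V_*(.A.B.C.)$. Since $.A.B.C.$ is the finest of the five partitions, every other partition is strictly coarser, so $\sum_{.A.B.C. < q} V(q) = V(.AB.C.) + V(.AC.B.) + V(.BC.A.) + V(.ABC.)$. Because $Av = Bv = Cv$ implies $Av = Bv$, we have $V(.ABC.) \subseteq V(.AB.C.)$, so this four-term sum is just $U$; and since $V(.A.B.C.) = \mathbb{R}^n$, we obtain $V_*(.A.B.C.) = U^{\perp}$.

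Next I would check that each of the four summands $V_*(.ABC.)$, $V_*(.AB.C.)$, $V_*(.BC.A.)$, $V_*(.AC.B.)$ lies in $U$. For the three pair partitions this is immediate, since $V_*(p) \subseteq V(p)$ by definition and each such $V(p)$ is one of the three summands of $U$. For $V_*(.ABC.)$ one chains $V_*(.ABC.) \subseteq V(.ABC.) \subseteq V(.AB.C.) \subseteq U$. Hence the whole sum $V_*(.ABC.)+V_*(.AB.C.)+V_*(.BC.A.)+V_*(.AC.B.)$ is contained in $U$, and intersecting with $V_*(.A.B.C.) = U^{\perp}$ yields a subspace of $U \cap U^{\perp} = \{0\}$, which is the claim.

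There is no real obstacle here; the only point requiring care is the direction of the refinement order together with the observation $V(.ABC.) \subseteq V(.AB.C.)$, which is what lets one replace the four-term sum defining $V_*(.A.B.C.)$ by the clean three-term sum $U$. It is worth noting that, unlike the preceding lemmas of this subsection, this case never uses that $A$, $B$, and $C$ are permutation matrices — the argument works for arbitrary $n \times n$ matrices — but I would keep the hypothesis as stated for uniformity with the rest of the section.
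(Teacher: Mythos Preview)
Your argument is correct and is essentially the same orthogonality observation the paper makes, just spelled out in more detail: the paper's one-line proof simply asserts that the two subspaces in the intersection are orthogonal by definition, and you verify this by exhibiting $U$ with the left summand contained in $U$ and $V_*(.A.B.C.) = U^{\perp}$. Your remark that the permutation hypothesis is unused here is accurate and worth keeping.
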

\begin{proof}
By definition, the two subspaces whose intersection is indicated in the left-hand side are orthogonal to one another. Therefore, they have a zero intersection.
  \qed
\end{proof}

With Lemmas~\ref{l:dirsum-first} through~\ref{l:dirsum-last} we get to the following theorem:

\begin{theorem}
\label{t:dirsum}
 If A, B, and C are permutation matrices, then 
 \[ \mathbb{R}^n = V_*(.ABC.) \oplus V_*(.AB.C.) \oplus V_*(.BC.A.) \oplus V_*(.AC.B.) \oplus V_*(.A.B.C.). \]
\end{theorem}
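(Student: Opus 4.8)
The plan is to prove the two halves of the internal direct sum statement separately: that the five subspaces are independent, so that their sum is direct, and that this sum is all of $\mathbb{R}^n$. The independence half is already essentially packaged in Lemmas~\ref{l:dirsum-first} through~\ref{l:dirsum-last}, and is the only place where the hypothesis that $A$, $B$, and $C$ are permutation matrices is needed; the spanning half is pure linear algebra and holds for arbitrary matrices.

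For independence, I would invoke the standard criterion that a sum $U_1 + U_2 + \cdots + U_k$ of subspaces is direct exactly when $U_i \cap (U_1 + \cdots + U_{i-1}) = \{0\}$ for every $i = 2, \ldots, k$. Listing the five subspaces in the order $V_*(.ABC.)$, $V_*(.AB.C.)$, $V_*(.BC.A.)$, $V_*(.AC.B.)$, $V_*(.A.B.C.)$, the case $i = 2$ is the observation $V_*(.ABC.) \cap V_*(.AB.C.) = \{0\}$ recorded just before Lemma~\ref{l:dirsum-first}, which itself follows because $V_*(.AB.C.)$ is orthogonal to $V(.ABC.) \supseteq V_*(.ABC.)$ by the definition of $V_*$. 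The cases $i = 3, 4, 5$ are, respectively, exactly the statements of Lemma~\ref{l:dirsum-first}, of the unlabeled lemma immediately preceding Lemma~\ref{l:dirsum-last}, and of Lemma~\ref{l:dirsum-last}. Together these give that the sum is direct; note that those lemmas, and hence this step, rest on Lemma~\ref{l:permsum} and the norm preservation it uses.

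For spanning, I would argue directly from the definition of $V_*$. Since no partition is strictly coarser than $.ABC.$, we have $V_*(.ABC.) = V(.ABC.)$. For a two-block partition such as $.AB.C.$, the only strictly coarser partition is $.ABC.$, so $V_*(.AB.C.) = V(.AB.C.) \cap V(.ABC.)^{\perp}$; since $V(.ABC.) \subseteq V(.AB.C.)$, this is the orthogonal complement of $V(.ABC.)$ inside $V(.AB.C.)$, whence $V(.AB.C.) = V_*(.AB.C.) \oplus V_*(.ABC.)$, and similarly for $.AC.B.$ and $.BC.A.$. Finally, $V_*(.A.B.C.)$ is by definition the orthogonal complement in $\mathbb{R}^n$ of $V(.AB.C.) + V(.AC.B.) + V(.BC.A.) + V(.ABC.)$, so $\mathbb{R}^n$ is the direct sum of $V_*(.A.B.C.)$ with that subspace; but by the previous sentence that subspace is contained in $V_*(.AB.C.) + V_*(.AC.B.) + V_*(.BC.A.) + V_*(.ABC.)$. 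Hence $\mathbb{R}^n = V_*(.ABC.) + V_*(.AB.C.) + V_*(.BC.A.) + V_*(.AC.B.) + V_*(.A.B.C.)$, and combined with directness this is the asserted decomposition.

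The step I expect to be the main obstacle is the spanning half: knowing the five pieces are independent says nothing about whether they fill $\mathbb{R}^n$, so one has to unwind the nested orthogonal complements in the definition of $V_*$, working up from the coarsest partition $.ABC.$ to the finest one $.A.B.C.$. Keeping the refinement order and those complements straight is the only delicate bookkeeping; everything else is routine.
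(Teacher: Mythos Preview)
Your proposal is correct and follows the paper's approach: the paper also derives the theorem directly from Lemmas~\ref{l:dirsum-first} through~\ref{l:dirsum-last} via the same add-one-subspace-at-a-time criterion for directness. You are in fact more thorough than the paper, which simply states the theorem after establishing directness and leaves the spanning half implicit; your unwinding of the nested orthogonal complements to show that the $V_*$ subspaces span $\mathbb{R}^n$ fills that gap and, as you note, requires no permutation hypothesis.
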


It is important to observe that Theorem~\ref{t:dirsum} does not apply to 
general matrices, for instance:

\[
A =
\begin{bmatrix}
  0&0\\
  0&0
\end{bmatrix}
,
B =
\begin{bmatrix}
  0&1\\
  0&1
\end{bmatrix}
,
C =
\begin{bmatrix}
  1&0\\
  1&0
\end{bmatrix}
\]

With these three matrices, we have
\[V_*(.ABC.) = \{0\},\]
\[V_*(.AB.C.) = \left< [1 \; 0]^t \right>,\]
\[V_*(.BC.A.) = \left< [1 \; 0]^t, [0 \; 1]^t \right>,\]
\[V_*(.AC.B.) = \left< [0 \; 1]^t \right>,\]
where $\left< X \right>$ denotes the space spanned by the set $X$.

We can see that, in this case, \[(V_*(.ABC.)+V_*(.AB.C.)+V_*(.BC.A.)) \cap V_*(.AC.B.) \neq \{0\}.\]

\subsection{Computing Median Candidates}
\label{sec:comp}

We now get into further detail in the procedure to compute these
matrices. We saw that, when $A$, $B$, and $C$ are permutation matrices,
$\mathbb{R}^n$ can be decomposed into a direct sum of $V_*$
subspaces. We will now implement the procedure outlined in
Section~\ref{s:mat-med}, summarized in Table~\ref{tab:contrib}, to
obtain a median candidate $M_A$.

\begin{table}
  \centering
  \caption{Distance contribution --- Given three permutation matrices $A$, $B$ and $C$, this table shows the distance contribution of each of the five subspaces partitioning $\mathbb{R}^n$ to the distances $d(M_A,A)$, $d(M_A,B)$, and $d(M_A,C)$, for a candidate median matrix $M_A$.}
  \label{tab:contrib}
  \vspace{1em}
  \begin{tabular}{ccccc}
    \toprule
    & & \multicolumn{3}{c}{Contributes to \ldots}\\ \cmidrule{3-5}
    Subspace & $M_A = \ldots$ & $d(M,A)$ &  $d(M,B)$ &  $d(M,C)$ \\
    \midrule
    $V_*(.A.B.C.)$ & $A$ & no  & yes & yes \\
    $V_*(.A B.C.)$ & $A$ & no  & no  & yes \\
    $V_*(.B C.A.)$ & $B$ & yes & no  & no  \\
    $V_*(.A C.B.)$ & $A$ & no  & yes & no  \\
    $V_*(.A B C.)$ & $A$ & no  & no  & no \\
    \bottomrule
  \end{tabular}
\end{table}

One way to implement this strategy is to compute orthogonal projection
matrices $P_1$, $P_2$, $P_3$, $P_4$, and $P_5$ for each of the
subspaces and then compute $M_A$ as follows:
\begin{align*}
  M_A &= AP_1 + AP_2 + BP_3 + AP_4 + AP_5\\
      &= A + (B-A)P_3 \,,
\end{align*}
since $P_1 + P_2 + P_3 + P_4 + P_5 = I$. To obtain each matrix $P_i$
all we need is a $n \times k_i$ matrix $B_i$ whose columns form an
orthonormal basis of the corresponding subspace, because $P_i$ can
then be written as $B_iB_i^t$.

To build the $B_i$s, one possibility is to use Function~\ref{alg:add} below, a basic routine to
expand an orthonormal basis so that it can generate a given extra vector $v$. It
projects the new vector in the orthogonal complement of the subspace
generated by the original basis and then adds the normalized projection
to form the new basis.  Function Add's complexity is $O(kn)$ arithmetic
operations (additions, subtractions, multiplications, and divisions).

\begin{function}
  \DontPrintSemicolon
  \LinesNumbered
  
  \KwData{An orthonormal basis $B = \{v_1, \ldots, v_k\}$ and a vector $v$.}
  \KwResult{An augmented basis.}

  \BlankLine
  
  $w \leftarrow v - \sum_{i=1}^k v_iv^tv_i$\;
  \eIf{$w = 0$}{Return $B$\;}{
    Normalize $w$\;
    Return $B \cup \{w\}$\;
  }

  \caption{Add($B$, $v$) Augments an orthonormal basis $B$ so that it can generate the vector $v$.}
  \label{alg:add}
\end{function}

Algorithm~\ref{alg:bases} uses Function~\ref{alg:add} and a linear
system solving method Solve, that returns a linearly independent set
of vectors spanning the solution space, to determine orthonormal
bases for each of the five $V_*$ subspaces.  Assuming that we can solve
a system of $n$ linear equations in time $O(n^3)$, the total complexity
for Algorithm~\ref{alg:bases} is $O(n^3)$ as well.

\begin{algorithm}
  \DontPrintSemicolon
  \LinesNumbered
  
  \KwData{Three $n \times n$ permutation matrices $A$, $B$ and $C$.}
  \KwResult{The bases for the $V_*$ subspaces.}
  
  \BlankLine

  $B \leftarrow \emptyset$\;

  $L \leftarrow$ Solve $\{(A-B)v = 0$, $(A-C)v = 0\}$\;
  \ForEach{$v \in L$}{$B \leftarrow$ \Add{$B$,$v$}\;}
  $B_1 \leftarrow B$ \tcp{Basis for $V_*(.ABC.)$}

  $L \leftarrow$ Solve $\{(A-B)v = 0\}$\;
  \ForEach{$v \in L$}{$B \leftarrow$ \Add{$B$,$v$}\;}
  $B_2 \leftarrow B-B_1$ \tcp{Basis for $V_*(.AB.C.)$}

  $L \leftarrow$ Solve $\{(B-C)v = 0\}$\;
  \ForEach{$v \in L$}{$B \leftarrow$ \Add{$B$,$v$}\;}
  $B_3 \leftarrow B-B_1-B_2$ \tcp{Basis for $V_*(.BC.A.)$}

  $L \leftarrow$ Solve $\{(A-C)v = 0\}$\;
  \ForEach{$v \in L$}{$B \leftarrow$ \Add{$B$,$v$}\;}
  $B_4 \leftarrow B-B_1-B_2-B_3$ \tcp{Basis for $V_*(.AC.B.)$}

  $L \leftarrow$ canonical basis for $\mathbb{R}^n$\;
  \ForEach{$v \in L$}{$B \leftarrow$ \Add{$B$,$v$}\;}
  $B_5 \leftarrow B-B_1-B_2-B_3-B_4$ \tcp{Basis for $V_*(.A.B.C.)$}

  \caption{Computation of orthonormal bases for each of the $V_*$ subspaces.}
  \label{alg:bases}
\end{algorithm}

Once we have determined orthonormal bases for the $V_*$ subspaces, it
is easy to find projection matrices for them. For each $B_i$, its
projection matrix is $P_i = B_iB_i^t$, as previously said.

Knowing the projection matrices, we can then finally compute the median
candidates, in Algorithm~\ref{alg:med}.  Previously, we saw how to
compute $M_A$.
It is also possible to define $M_B$ and $M_C$ in an analogous way.
The matrix $M_B$ follows $B$ in $V_*(.A.B.C.)$ instead of $A$, and $M_C$
follows $C$.  The entire computation takes $O(n^3)$ arithmetic operations.

\begin{algorithm}
  \DontPrintSemicolon
  \LinesNumbered
  
  \KwData{Three $n \times n$ permutation matrices $A$, $B$ and $C$.}
  \KwResult{Three median candidates $M_A$, $M_B$, and $M_C$.}
  
  \BlankLine

  Compute the bases $B_1$, $B_2$, $B_3$, $B_4$, and $B_5$ with Algorithm~\ref{alg:bases}\;
  \ForEach{$i \in \{1,2,3,4,5\}$}{$P_i \leftarrow B_iB_i^t$\;}
  $M_A \leftarrow A + (B-A)P_3$\;
  $M_B \leftarrow B + (A-B)P_4$\;
  $M_C \leftarrow C + (B-C)P_2$\;

  \caption{Computation of median candidates.}
  \label{alg:med}
\end{algorithm}

We can also define a median candidate $M_I$, that follows the identity
in the subspace $V_*(.A.B.C.)$.
That is, \[M_I = A(P_1 + P_2 + P_4) + BP_3 + P_5 = A + (B-A)P_3 + (I-A)P_5\,.\]
We still have no proven results on its total
median score, but we conjecture that, for genomic matrices, $M_I$ is
better than $M_A$, $M_B$ and $M_C$, or it might even be a median. This
is due to the symmetric nature of the genomic matrices.


\subsection{Approximation Factor}

We already know that the direct sum of the $V_*$ subspaces is $\mathbb{R}^n$, 
when $A$, $B$, and $C$ are permutation matrices. Now we can use this property 
to express the distance in terms of these subspaces dimensions. The matrix $M_A$ will have 
a total distance to $A$, $B$, and $C$ equal to:
\begin{eqnarray}
  d(M_A; A,B,C) &=& d(M,A) + d(M,B) + d(M,C) \nonumber \\
   &=& \dim V_*(.BC.A.) + \nonumber \\
   &&  \dim V_*(.A.B.C.) + \dim V_*(.AC.B.) + \nonumber \\
   &&  \dim V_*(.A.B.C.) + \dim V_*(.AB.C.) \nonumber \\
   &=& 2 \dim V_*(.A.B.C.) + \nonumber \\
   &&  \dim V_*(.AB.C.) + \dim V_*(.AC.B.) + \dim V_*(.BC.A.).
   \label{e:median_distance}
\end{eqnarray}

There are cases where $M_A$ is not a median, even if the input matrices are genomic. Take, for example:
\[A =
\begin{bmatrix}
  0 & 1 & 0 \\
  1 & 0 & 0 \\
  0 & 0 & 1
\end{bmatrix}
,
B =
\begin{bmatrix}
  1 & 0 & 0 \\
  0 & 0 & 1 \\
  0 & 1 & 0
\end{bmatrix}
\text{, and }
C =
\begin{bmatrix}
  0 & 0 & 1 \\
  0 & 1 & 0 \\
  1 & 0 & 0
\end{bmatrix}.
\]

By Equation~(\ref{e:median_distance}), the matrix $M_A$ has the following total score:
\[
  d(M_A; A, B, C) = 2 \times 2 + 0 + 0 + 0 = 4.
\]

However, the identity matrix $I$ has a better total score than $M_A$, and is actually a median in this case:
\[
  d(I; A, B, C) = 1 + 1 + 1 = 3.
\]

Thus, given that the procedure described in the Section~\ref{sec:comp} does not guarantee a matrix median, it is interesting to know whether it is an approximation algorithm, namely, whether there is a constant $\rho$ such that the candidate's total score is at most $\rho$ times the score of a median.

In general, consider permutation matrices $A$, $B$, and $C$ and a matrix $M$ such that $d(M; A, B, C)$ is minimum, that is, $M$ is a median.
There is a trivial lower bound for the median score of a matrix, easily obtained with the help of the triangle inequality, namely
\[
  d(M; A, B, C) \geq \frac{1}{2}(d(A,B) + d(B,C) + d(C,A)).
\]

According to Equation~\eqref{e:median_distance}, the median score of the approximate solution $M_A$ constructed in Section~\ref{sec:comp} is given by:
\begin{eqnarray*}
  d(M_A; A,B,C) &=& 2 \dim V_*(.A.B.C.) + \\
   &&  \dim V_*(.AB.C.) + \dim V_*(.AC.B.) + \dim V_*(.A.BC.), 
\end{eqnarray*}

For comparison, we can write the trivial lower bound in terms of 
subspace dimensions.  It suffices to write each distance as a
dimension sum of the subspaces where they differ.  The result is:
\begin{eqnarray*}
  \frac{1}{2}(d(A,B) + d(B,C) + d(C,A))
   &=& \frac{3}{2} \dim V_*(.A.B.C.) + \dim V_*(.AB.C.) + \\
   && \dim V_*(.AC.B.) + \dim V_*(.A.BC.).
\end{eqnarray*}

Then, to prove that the matrix $M_A$ is indeed an approximate solution, it suffices to show that there is a constant $\rho$ such that 
\[
  d(M_A; A, B, C) \leq \rho d(M; A, B, C),
\]
for any given matrices $A$, $B$, and $C$.

It is possible to demonstrate that $\frac{4}{3}$ is an approximate factor for our solution, as follows:
\begin{eqnarray*}
  d(M_A; A, B, C) &=& 2 \dim V_*(.A.B.C.) + \dim V_*(.AB.C.) + \dim V_*(.AC.B.) + {}\\
                  &&  {} + \dim V_*(.A.BC.)\\
                  &\leq& \frac{4}{3}[\frac{3}{2} \dim V_*(.A.B.C.) + \dim V_*(.AB.C.) + \dim V_*(.AC.B.) + {}\\
                  &&  {} + \dim V_*(.A.BC.)]\\
                  &\leq& \frac{4}{3}[\frac{1}{2}(d(A,B) + d(B,C) + d(C,A))]\\
                  &\leq& \frac{4}{3} d(M; A, B, C).\\
\end{eqnarray*}

Thus, we proved that $d(M_A; A, B, C)$ is at most $\frac{4}{3}$ times $d(M; A, B, C)$. The same result holds for $M_B$ and $M_C$.

\section{Conclusions}
\label{s:con}

We showed in this paper that it is possible to define a distance on
matrices in a way that yields exactly the algebraic distance when
restricted to permutation matrices. For the case where the input matrices
represent genomes, we have shown how to compute
matrices that approximate the median with factors at least as good as the best corner,
that is, they are approximations to the median within a factor of $\frac{4}{3}$.
In addition, we showed a construction, in the form of matrix $M_I$, that in several examples we worked out is even better than the other median candidates.  We conjecture that this matrix might be a median if $A$, $B$, and $C$ come from genomes.  The implications 
to computing algebraic genome medians can be significant.

\section{Acknowledgements}
  We thank Luiz Antonio Barrera San Martin, who suggested
  linear representations in a discussion on permutations,
  the Research Funding Agency of the State of Sao Paulo (FAPESP), for
  grants 2012/13865-7 and 2012/14104-0, and the anonymous referees for
  helpful comments that improved the paper significantly.
\bibliographystyle{splncs03}
\bibliography{median-small}

\end{document}